
\documentclass[11pt,a4paper]{article}
\usepackage[ansinew]{inputenc}
\usepackage[T1]{fontenc}
\usepackage{amsthm}
\usepackage{amssymb}
\usepackage{amsmath}
\usepackage{amsfonts}

\usepackage{epsfig}
\usepackage{graphicx}

\usepackage{graphicx}
\usepackage{pstcol,pstricks,pst-node,pst-text,pst-3d}



\newtheorem{theorem}[equation]{Theorem}

\newtheorem{lemma}[equation]{Lemma}

\newtheorem{example}[equation]{Example}

\theoremstyle{definition}
\newtheorem{definition}[equation]{Definition}
\newtheorem{remark}[equation]{Remark}

\numberwithin{equation}{section}

\newcommand{\R}{{\mathbb{R}}}


\begin{document}

\title{On the Localization of the Personalized PageRank of Complex Networks}

\date{}

\author{
E.\,Garc\'{\i}a$^1$, F.\,Pedroche$^2$ and M.\,Romance$^{1,3}$\\
\\
{\small $^1$\,Departamento de Matem\'{a}tica Aplicada} \\
{\small ESCET - Universidad Rey Juan Carlos} \\
{\small C/ Tulip\'an s/n 28933 M\'ostoles (Madrid), Spain.}\\
 \\
{\small $^2$\,Institut de Matem\`{a}tica Multidisciplin\`{a}ria} \\
{\small Universitat Polit\`{e}cnica de Val\`{e}ncia} \\
{\small Cam\'{\i} de Vera s/n. 46022 Val\`{e}ncia, Spain.}\\
 \\
{\small $^3$\,Centro de Tecnolog\'{\i}a Biom\'edica} \\
{\small Universidad Polit\'{e}cnica de Madrid} \\
{\small 28223 Pozuelo de Alarc\'on (Madrid), Spain.}
}

\maketitle

\begin{abstract}
In this paper new results on personalized PageRank are shown. We consider directed graphs that may contain dangling nodes. The main result presented gives an analytical characterization of all the possible values of the personalized PageRank for any node.We use this result to give a theoretical justification of a recent model that uses the personalized PageRank to classify users of Social Networks Sites. We introduce new concepts concerning competitivity and leadership in complex networks. We also present some theoretical techniques to locate leaders and competitors which are valid for any personalization vector and by using only information related to the adjacency matrix of the graph and the distribution of its dangling nodes.
\end{abstract}

\section{Introduction}\label{sec:intro}

Much effort has been done in some aspects related to PageRank and its applications since the introduction of the PageRank algorithm to rank pages on the web \cite{PaBrMoWi}. We are interested in the use of the so-called personalization vector to bias the PageRank to some nodes. We refer the reader to  \cite{LaMebook} and \cite{BoSaVi} for the theoretical basis of the PageRank algorithm.

The idea of biasing the PageRank vector using a personalization vector was, in fact, suggested originally in  \cite{PaBrMoWi}. The first time that someone uses the personalization vector to bias to some topics appears in  \cite{Ha03}. In \cite{HaKaJe} some different ways of biassing the PageRank with personalization vectors are summarized. In \cite{EiVa} the authors propose to use the personalization vector to bias the PageRank to pages that were visited more frequently by previous users. To reduce computational complexity the usual strategies consist in taking low rank approximations \cite{Tong} or to decompose into subgraphs \cite{JeWi}, \cite{VaChGu}. Another approach consists in using Monte Carlo methods to compute only the top-k Personalized PageRank \cite{Avra10}.

As a centrality measure, Personalized PageRank can also be used to classify users in Social Network Sites \cite{Pe}, \cite{Pe11}, \cite{Pe12}. A way of using the Personalized PageRank to rank nodes in an SNS is by using the personalization vector to incorporate features of the users. These features can be popularity
(e.g., the number of friends or followers of the user), activity (e.g., the number of actions made by the user) and recentness (e.g., the up-to-date of the actions of the user in the SNS). In \cite{Pe}, the fundamentals of a model that uses the Personalized PageRank to rank users is SNS were presented. This model left open some theoretical questions. One of these questions is to what extend one can use the personalization vector to modify the PageRank vector. In this paper we address this question and we show, in particular, that the component $i$ of the PageRank vector is bounded, being this bound a sharp one, and valid for any personalization vector.  We also derive some theoretical properties that let us detect nodes that compete with each other to gain PageRank.  We also give
some generalizations to some definitions introduced in \cite{Pe}.

The structure of the paper is as follows. In section~\ref{sec:def} the basic definitions and results used in the rest of the paper are presented. In addition to this, a technical general lemma about row stochastic matrices is proved in order to provide the tools of the results of the following sections. Section~\ref{sec:mainthm} is devoted to prove the main result of the paper that locates all the possible values of the personalized PageRank for each node of a network. Finally,  section~\ref{sec:applications} presents several applications of the localization theorem proved before. The applications include an analytical result that gives necessary and sufficient conditions for the competitivity between nodes  and a characterization of the leadership of nodes in a complex network. These analytical results give some easy algorithms to locate leaders and competitors which are valid for any personalization vector and only use information related to the adjacency matrix of the graph and the distribution of its dangling nodes. In this final section several examples are presented in order to illustrate the results proved.

 \section{Some definitions and a technical lemma}\label{sec:def}

Let ${\mathcal G}= ({\mathcal N}, {\mathcal E})$ be a directed graph where ${\mathcal N} = \{ 1,2, \ldots, n \}$ and $n\in \mathbb{N}$. Note that all the results presented in this paper deal with directed networks, but they can be straightforwardly stated for un-directed networks. The link  $(i,j)$ belongs to the set  ${\mathcal E}$ if and only if there exists a link connecting node $i$ to node $j$. The {\it adjacency matrix} of ${\mathcal G}$ is an $n\times n$-matrix
\[
A=(a_{ij})  \hbox{ where } a_{ij}=\left\{
       \begin{array}{ll}
         1, & \hbox{if $(i,j)$ is a link of ${\mathcal G}$} \\
         0, & \hbox{otherwise.}
       \end{array}\right.
\]
A link $(i,j)$ is said to be an {\em outlink} for node $i$ and an {\em inlink} for node $j$. We denote $k_{out}(i)$ the {\it outdegree} of node $i$, i.e.,  the number of outlinks of a node $i$. Notice that $k_{out}(i)=\sum_k a_{ik}$. The graph ${\mathcal G}= ({\mathcal N}, {\mathcal E})$ may have {\it dangling nodes}, which are nodes $i\in{\mathcal N}$ with zero outdegree.

Let  $P=(p_{ij}) \in\mathbb{R}^{n\times n}$  be the {\it row stochastic matrix} associated to ${\mathcal G}$ defined in the following way:
\begin{itemize}
\item if $i$ is a dangling node, $p_{ij}=0$ for all $j=1,\dots, n$,
\item otherwise, $p_{ij}=\frac{a_{ij}}{k_{out}(i)}=\frac{a_{ij}}{\sum_{k}a_{ik}}$.
\end{itemize}

Vectors of $\mathbb{R}^n$ will be denoted by column matrices. In particular,
\[
\begin{split}
\mathbf{e}_1=&(1,0,\cdots,0)^T\in\mathbb{R}^n,\\
&\cdots \\
\mathbf{e}_n=&(0,\cdots,0,1)^T\in\mathbb{R}^n,\\
\mathbf{e}=&\mathbf{e}_1+\cdots+\mathbf{e}_n=(1,\cdots,1)^T.
\end{split}
\]
The $i^{\rm th}$-component of a vector $\mathbf{v}=(v_1,\cdots, v_n)^T\in\mathbb{R}^n$
is given by the product $\mathbf{v}^T\mathbf{e}_i=v_i$, and the sum of the components of the vector $\mathbf{v}\in \mathbb{R}^n$ is 1 if  $\mathbf{v}^T\mathbf{e}=1$. Moreover, we will say that $\mathbf{v}>0$ if all the components $v_i$ of $\mathbf{v}$ are greater than zero, i.e., $\mathbf{v}^T\mathbf{e}_i>0$, $i=1,\dots, n$.

We will use the {\it personalized PageRank vector}. The ingredients to build such PageRank vector \cite{PaBrMoWi} are:
\begin{itemize}
\item A {\it damping factor} $\alpha\in (0,1)$.
\item A {\it distribution of dangling nodes} $\mathbf{u}\in \mathbb{R}^n
$ such that $\mathbf{u}>0$  and $\mathbf{u}^T\mathbf{e}=1$. The dangling nodes will be characterized by a vector $\mathbf{ d}\in \mathbb{R}^n$  defined as $\mathbf{d}=(d_1,\dots,d_n)^T$, where
\[
d_i=\left\{
       \begin{array}{ll}
         1, & \hbox{if $i$ is a dangling node of ${\mathcal G}$} \\
         0, & \hbox{otherwise.}
       \end{array}
     \right.
\]
\item A {\it personalization  vector} $\mathbf{v}\in\mathbb{R}^n$ such that $\mathbf{v}>0$  and $\mathbf{v}^T\mathbf{e}=1$.
\end{itemize}
The {\it Google matrix} $G=G(\alpha,\mathbf{u},\mathbf{v})$ with dangling nodes and personalized vector $\mathbf{v}$ is defined as
\begin{equation}\label{e:PRMatrix}
G=\alpha(P+\mathbf{d}\mathbf{u}^T)+(1-\alpha)\mathbf{e}\mathbf{v}^T\in \mathbb{R}^{n\times n}
\end{equation}
(this matrix $G$ is row-stochastic, i.e., $G\mathbf{e}=\mathbf{e}$). The {\it PageRank vector} $\mathbf{\pi}=\mathbf{ \pi}(\alpha,\mathbf{u},\bf{v})$ is the unique eigenvector of $G^T$ associated to eigenvalue 1 such that $\mathbf{\pi}^T\mathbf{ e}=1$, i.e., $\mathbf{\pi}>0$, $\mathbf{\pi}^T\mathbf{e}=1$ and $\mathbf{\pi}^TG=\mathbf{\pi}^T$ (see \cite{PaBrMoWi}).

From now on we will consider a fixed damping factor $\alpha$ (usually $\alpha=0.85$) and a fixed distribution of dangling nodes $\textbf{u}$, so the PageRank matrix $G$ and the PageRank vector will only depend on the personalization vector $\mathbf{v}$ we are considering.

Since $\mathbf{\pi}^TG=\mathbf{\pi}^T$, from the definition of $G$ in (\ref{e:PRMatrix}) and the fact that $\mathbf{\pi}^T\mathbf{e}=1$ we get that
\[
\begin{split}
\mathbf{\pi}^T&=\mathbf{\pi}^TG=\mathbf{\pi}^T\left(\alpha(P+\mathbf{d}\mathbf{u}^T)+(1-\alpha)\mathbf{e}\mathbf{v}^T\right)\\
&=\alpha\mathbf{\pi}^T(P+\mathbf{d}\mathbf{u}^T)+(1-\alpha)\mathbf{\pi}^T\mathbf{e}\mathbf{v}^T\\
&=\alpha\mathbf{\pi}^T(P+\mathbf{d}\mathbf{u}^T)+(1-\alpha)\mathbf{v}^T
\end{split}
\]
so $\mathbf{\pi}^T(I_n-\alpha P-\alpha \mathbf{d}\mathbf{u}^T)=(1-\alpha)\mathbf{v}^T$, where $I_n\in \mathbb{R}^{n\times n}$ is the identity matrix. Therefore it was shown in \cite{BoSaVi} that
\begin{equation}\label{e:Boldhi}
\mathbf{\pi}^T=(1-\alpha)\mathbf{v}^T(I_n-\alpha(P+\mathbf{d}\mathbf{u}^T))^{-1}.
\end{equation}
We will denote by $X$ the $n\times n$-matrix  appearing in formula (\ref{e:Boldhi}) above
\[
X=(1-\alpha)(I_n-\alpha(P+\mathbf{d}\mathbf{ u}^T))^{-1}, \hbox{ so } \mathbf{\pi}^T=\mathbf{v}^TX.
\]
Notice that formula (\ref{e:Boldhi}) indicates that the PageRank of each node can be expressed as a function of the personalization vector $\mathbf{v}$ since $\mathbf{\pi}=\mathbf{\pi}(\mathbf{v})=X^T\mathbf{v}$ (see \cite{BoSaVi}). Notice that this equality  makes sense for all $\mathbf{v}\in \mathbb{R}^n$ and gives the PageRank when $\mathbf{v}>0$ and $\mathbf{v}^T\mathbf{e}=1$.

It is easy to check that the matrix $P_{\mathbf{u}}=P+\mathbf{d}\mathbf{u}^T$ appearing in (\ref{e:Boldhi}) is a row-stochastic matrix since
\[
P_{\mathbf{u}}\mathbf{e}= \left(P+\mathbf{d}\mathbf{u}^T\right)\mathbf{e}
=P\mathbf{e}+\mathbf{d}\mathbf{u}^T\mathbf{e}
=P\mathbf{e}+\mathbf{d}=\mathbf{e}.
\]
In the next section we will use the following lemma dealing with row-stochastic matrices as $P_{\mathbf{u}}$:

\begin{lemma}\label{diagonaldominant}
Let $Q$ be a row-stochastic matrix and $\alpha\in(0,1)$. Then the matrix $Y=I_n-\alpha Q$ is strictly row-diagonally-dominant,  $X=(1-\alpha)Y^{-1}$ is  strictly diagonally-dominant of its column entries and the maximum of each column $i$ of $X$ is achieved in $x_{ii}$.
\end{lemma}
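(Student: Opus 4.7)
The plan is to treat the three claims in turn. All three reduce to short arguments once one has (i) a direct entry-wise expansion of $Y$, (ii) a Neumann series representation of $X$, and (iii) the fixed-point identity $X = (1-\alpha)I_n + \alpha QX$ obtained from $(I_n - \alpha Q)X = (1-\alpha)I_n$.

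For the row-dominance of $Y$, I would simply unfold $y_{ii} = 1-\alpha q_{ii}$ (positive, since $\alpha q_{ii}\le\alpha<1$) and $y_{ij} = -\alpha q_{ij}$, and use $\sum_j q_{ij} = 1$: the diagonal excess $|y_{ii}| - \sum_{j\ne i}|y_{ij}|$ collapses to $1-\alpha>0$. Nothing more is required.

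For the remaining two claims, I would first observe that $\|\alpha Q\|_\infty = \alpha <1$, so $Y$ is invertible and
\[
X \;=\; (1-\alpha)\sum_{k=0}^\infty \alpha^k Q^k .
\]
Each $Q^k$ is a product of row-stochastic matrices, hence non-negative, so $X\ge 0$, and the $k=0$ term alone yields the pointwise lower bound $x_{ii}\ge 1-\alpha>0$. Next, reading off the fixed-point identity componentwise gives
\[
x_{ij} \;=\; (1-\alpha)\,\delta_{ij} + \alpha \sum_k q_{ik}\, x_{kj} .
\]
Fixing a column index $j$ and setting $M_j = \max_i x_{ij}$, the right-hand side shows that for $i\ne j$ the entry $x_{ij}$ is $\alpha$ times a convex combination of the entries of the $j$-th column, so $x_{ij}\le \alpha M_j$. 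If $M_j$ were attained at some $i_0\ne j$, this would force $M_j\le \alpha M_j$ and hence $M_j = 0$, contradicting $x_{jj}\ge 1-\alpha>0$; so the maximum of the column must be attained at the diagonal. The same inequality then upgrades to the strict bound $x_{ij}<x_{jj}$ for every $i\ne j$, giving the diagonal dominance among the column entries.

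The only slightly delicate point is producing the \emph{strict} inequality in the last step, and it is delivered cleanly by $\alpha<1$ together with the positive lower bound $x_{jj}\ge 1-\alpha$ coming from the $k=0$ term of the Neumann series, so no extra argument is needed. The proof uses nothing about $Q$ beyond row-stochasticity, so it will apply verbatim to $P_{\mathbf{u}} = P+\mathbf{d}\mathbf{u}^T$ in the sequel.
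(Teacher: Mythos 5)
Your proof is correct, but it follows a genuinely different route from the paper's for the second and third claims. The paper handles the first claim exactly as you do (expanding $y_{ii}=1-\alpha q_{ii}$, $y_{ij}=-\alpha q_{ij}$ and using $\sum_j q_{ij}=1$), but then it simply cites Theorem 2.5.12 of Horn and Johnson to conclude that the inverse of a strictly row-diagonally-dominant matrix is strictly diagonally dominant of its column entries, and invokes the fact that $Y$ is a nonsingular M-matrix to get $Y^{-1}\ge 0$ and drop the absolute values, so that $\max_k x_{ki}=x_{ii}$. You instead make the whole argument self-contained: the Neumann series $X=(1-\alpha)\sum_{k\ge 0}\alpha^k Q^k$ gives both $X\ge 0$ and the lower bound $x_{jj}\ge 1-\alpha$ (this is in effect a hands-on proof of the M-matrix inverse-positivity the paper imports), and the fixed-point identity $x_{ij}=(1-\alpha)\delta_{ij}+\alpha\sum_k q_{ik}x_{kj}$ yields a maximum-principle argument replacing the Horn--Johnson theorem. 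Your version buys elementarity, independence from the two external references, and the quantitatively stronger conclusion $x_{ij}\le \alpha\, x_{jj}$ for $i\ne j$ (not just $x_{ij}<x_{jj}$); the paper's version is shorter on the page and places the lemma within standard matrix-analysis machinery. I see no gap in your argument.
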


\begin{proof}
Clearly $Q\mathbf{e}=I_n\mathbf{e}-\alpha Q\mathbf{e}=(1-\alpha)\mathbf{e}$, i.e., the sum of the entries of each row of $Q$ is $1-\alpha$. Therefore, since $\alpha\in (0,1)$ and $0\le q_{ik}\le 1$ for all $i,k=1,\dots, n$, we get that
\[
|y_{ii}|=|1-\alpha q_{ii}|=1-\alpha q_{ii}=1-\alpha+\alpha\sum_{k\ne i} q_{ik}>\alpha\sum_{k\ne i} q_{ik}=\sum_{k\ne i}|y_{ik}|,
\]
i.e., $Y$ is strictly row-diagonally-dominant. Now, by Theorem 2.5.12 in \cite{HoJo}, $Y^{-1}$ and $X=(1-\alpha)Y^{-1}$ are  strictly diagonally-dominant of their column entries and therefore for every $k\ne i$
\[
|x_{ii}|>|x_{ki}|.
\]
Moreover, since $Y$ is a (nonsingular) M-matrix (see, for example, \cite{BP}) we have that $Y^{-1} \geq 0$. Hence the absolute values in the formula above can be deleted and we get
\[
\max_k x_{ki}=x_{ii}.
\]
\end{proof}

\section{Main result: Location of Personalized PageRank}\label{sec:mainthm}

The main contribution of this paper is the solution to the following problem:
\medskip

\noindent\textbf{Problem.} Given a graph ${\mathcal G}$  with dangling nodes indicated by some vector $\mathbf{d}$, a fixed damping factor $\alpha\in (0,1)$ and fixed dangling nodes distribution $\mathbf{u}$, is there an easy way to locate all the possible values of the PageRank for each node $i$?

\begin{definition}{} Given a graph ${\mathcal G}$  with dangling nodes indicated by some vector $\mathbf{d}$, a fixed damping factor $\alpha\in (0,1)$ and fixed dangling nodes distribution $\mathbf{u}$, for each node $i\in{\mathcal N}$ we define ${\mathcal {PR}}(i)$ as the set of all possible values of Personalized PageRank of node $i$, i.e.,
$$
{\mathcal {PR}}(i)=\{\pi^T(\mathbf{v})\mathbf{e}_i \hbox{ for all $\mathbf{v}\in\mathbb{R}^n$},\ \mathbf{v}>0,\ \mathbf{v}^T\mathbf{e}=1\}\subset (0,1).
$$
\end{definition}

The following theorem shows that ${\mathcal {PR}}(i)$ coincides with an open interval whose extreme values are given by the the maximum and minimum entries of the $i^{\rm th}$-column of $X$.

\begin{theorem}\label{mainth} Given a graph ${\mathcal G}$  with dangling nodes indicated by some vector $\mathbf{d}$, a fixed damping factor $\alpha\in (0,1)$ and fixed dangling nodes distribution $\mathbf{u}$, for each node $i\in{\mathcal N}$
\[
{\mathcal {PR}}(i)=(\min_j x_{ji},\ x_{ii}),
\]
where $X=(x_{ij})=(1-\alpha)(I_n-\alpha(P-\mathbf{d}\mathbf{u}^T))^{-1}$ is the matrix appearing in formula (\ref{e:Boldhi}).
\end{theorem}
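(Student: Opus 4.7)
The plan is to exploit the linear dependence $\pi^T(\mathbf{v}) = \mathbf{v}^T X$ established in \nref{e:Boldhi}, so that the $i^{\rm th}$ component is simply
\[
\pi^T(\mathbf{v})\mathbf{e}_i \;=\; \sum_{j=1}^{n} v_j\, x_{ji},
\]
a convex combination (with strictly positive weights summing to one) of the entries of the $i^{\rm th}$ column of $X$. This reduces the problem to a purely geometric one: describing the image of the open standard simplex $\Delta^\circ = \{\mathbf{v}>0 : \mathbf{v}^T\mathbf{e}=1\}$ under the linear functional $\mathbf{v}\mapsto \mathbf{v}^T(X\mathbf{e}_i)$.

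First I would establish the inclusion ${\mathcal{PR}}(i)\subset(\min_j x_{ji},\ x_{ii})$. Let $m=\min_j x_{ji}$ and $M=\max_j x_{ji}$. For any $\mathbf{v}\in\Delta^\circ$, the identity $\sum_j v_j x_{ji} = m + \sum_j v_j (x_{ji}-m)$ together with $v_j>0$ and $x_{ji}-m\ge 0$ gives $\sum_j v_j x_{ji}\ge m$, with equality only if $x_{ji}=m$ for every $j$ in the support of $\mathbf{v}$, i.e.\ for all $j$; since by Lemma~\ref{diagonaldominant} one has $x_{ii}>x_{ki}$ for every $k\ne i$, the column is not constant, so the inequality is strict. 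The same argument applied to $M-x_{ji}$ gives the strict upper bound, and again by Lemma~\ref{diagonaldominant} we have $M=x_{ii}$.

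Next I would prove the reverse inclusion $(\min_j x_{ji},\ x_{ii})\subset {\mathcal{PR}}(i)$. The image of the convex connected set $\Delta^\circ$ under the continuous linear map $\mathbf{v}\mapsto \mathbf{v}^T(X\mathbf{e}_i)$ is a connected subset of $\R$, hence an interval. To see that this interval reaches arbitrarily close to $m$ and to $M=x_{ii}$, fix an index $j_*$ with $x_{j_*i}=m$ and, for small $\varepsilon>0$, consider the strictly positive probability vector
\[
\mathbf{v}^{(\varepsilon)} \;=\; (1-(n-1)\varepsilon)\,\mathbf{e}_{j_*} \;+\; \varepsilon\sum_{k\ne j_*}\mathbf{e}_k,
\]
whose image equals $m + O(\varepsilon)$; the analogous construction centred on $\mathbf{e}_i$ produces values converging to $x_{ii}$. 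By connectedness the image is therefore an interval containing points arbitrarily close to both endpoints, and by the strict bounds proved in the previous step it cannot contain the endpoints themselves; hence it equals exactly $(m,\ x_{ii})$.

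The only non-routine point is the strict separation $x_{ii}>x_{ki}$ for $k\ne i$, which is precisely what Lemma~\ref{diagonaldominant} delivers: without it the column could be constant and the interval would collapse. Everything else is a direct consequence of the linearity of $\mathbf{v}\mapsto\pi(\mathbf{v})$ together with the convex geometry of $\Delta^\circ$.
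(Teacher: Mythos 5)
Your proof is correct and follows essentially the same route as the paper: the $i^{\rm th}$ PageRank component is a strict convex combination of the entries of the $i^{\rm th}$ column of $X$, Lemma~\ref{diagonaldominant} identifies $x_{ii}$ as the strict column maximum, and the endpoints are approached by personalization vectors concentrating mass on $\mathbf{e}_i$ and $\mathbf{e}_{j_*}$. The only (cosmetic) difference is in the surjectivity step, where you invoke connectedness of the image of the open simplex under a continuous linear functional, while the paper reaches the same conclusion via an explicit two-parameter family $\mathbf{v}_{\lambda\varepsilon}$ and iterated limits; your phrasing is, if anything, the tidier way to make that point rigorous.
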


\begin{proof}
We will separate the proof of the theorem in two steps:
\begin{enumerate}
\item[{Step 1.}] $\min_j x_{ji}<{\mathcal {PR}}(i)< x_{ii}$ for every personalization vector $\mathbf{v}$;
\item[{Step 2.}] every $x$ with $\min_j x_{ji}< x< x_{ii}$ can be achieved as the PageRank of node $i$ for a certain personalization vector $\mathbf{v}$.
\end{enumerate}

\noindent\underbar{Proof of Step 1.} Without loss of generality we can suppose that $i=1$. Let $\mathbf{v}\in \mathbb{R}^n$ such that $\mathbf{v}^T\mathbf{e}=1$. Then the first component of $\pi(\mathbf{v})$ is
\[
\pi^T(\mathbf{v})\mathbf{e}_1=\mathbf{v}^TX\mathbf{e}_1=\mathbf{v}^T\left(
                                                                           \begin{array}{c}
                                                                             x_{11} \\
                                                                             \vdots \\
                                                                             x_{n1} \\
                                                                           \end{array}
                                                                         \right)=\sum_j v_jx_{j1}.
\]
In particular, if $\mathbf{v}$ is a personalization vector ($\mathbf{v}>0$ and $\mathbf{v}^T\mathbf{e}=1$),  $\pi(\mathbf{v})$ is the PageRank corresponding to this personalization vector and the formula above gives the first component of the PageRank. Since in this case all the components of  $\mathbf{v}$ are positive and $\sum_j v_j=1$, $\sum_j v_jx_{j1}$ is a {\it strict} convex combination of the entries of the first column of $X$ and $\min_j x_{j1}<\sum_j v_jx_{j1}<\max_j x_{j1}$. Moreover, by Lemma \ref{diagonaldominant}, $\max_j x_{j1}=x_{11}$ and Step 1 is shown.

\medskip
\noindent\underbar{Proof of Step 2.}  Without loss of generality  suppose again that $i=1$. By the calculations done in Step 1, the first component of  $\pi(\mathbf{v})$ of every $\mathbf{v}\in\mathbb{R}^n$, $\mathbf{v}^T\mathbf{e}=1$, equals $\sum_j v_jx_{j1}$. In particular, the first component of  $\pi(\mathbf{e}_1)$ is $x_{11}$, the first component of  $\pi(\mathbf{e}_2)$ is $x_{21}$, etc., and the extreme values of the open interval
$$
{\mathcal {PR}}(1)=(\min_j x_{j1},\ x_{11})
$$
would be achieved if we admitted  $\mathbf{e}_1$ and $\mathbf{e}_{j_1}$, where we denote by $j_1$ an index where the minimum of the first column of $X$ is reached.

Now we define
\[
\mathbf{v}_{1\varepsilon}=\left(\begin{array}{c}
                                        1-\varepsilon \\
                                        \frac{\varepsilon}{n-1}                                        \\
                                        \frac{\varepsilon}{n-1}                                        \\
                                        \vdots \\
                                        \frac{\varepsilon}{n-1}
                                        \end{array}
                                        \right)\qquad
\mathbf{v}_{j_1\varepsilon}=\left(\begin{array}{c}
                                        \frac{\varepsilon}{n-1} \\
                                        \vdots \\
                                        1-\varepsilon\\
                                        \vdots \\
                                        \frac{\varepsilon}{n-1}
                                        \end{array}
                                        \right)\leftarrow j_1\text{ coordinate}
\]
for every $\varepsilon\in(0,1)$. Then it is easy to check that both $\mathbf{v}_{1\varepsilon}>0$ and $\mathbf{v}_{j_1\varepsilon}>0$, $\mathbf{ v}_{1\varepsilon}^T\mathbf{e}=1=\mathbf{v}_{j_1\varepsilon}^T\mathbf{e}$, and
\[
\lim_{\varepsilon\to 0^+}\pi^T(\mathbf{v}_{1\varepsilon})\mathbf{e}_1=x_{11},\qquad
\lim_{\varepsilon\to 0^+}\pi^T(\mathbf{v}_{j_1\varepsilon})\mathbf{e}_1=x_{j_11}.
\]
Finally, for every $\lambda\in(0,1)$ we define
\[
\mathbf{v}_{\lambda\varepsilon}=\lambda\mathbf{ v}_{1\varepsilon}+(1-\lambda)\mathbf{v}_{j_1\varepsilon}>0
\]
which satisfies that
\[
\lim_{\varepsilon\to 0^+} \pi^T(\mathbf{ v}_{\lambda\varepsilon})\mathbf{e}_1=\lambda x_{11}+(1-\lambda) x_{j_11},\hbox{ so}
\]
\[
\lim_{\lambda\to 1}\lim_{\varepsilon\to 0^+} \pi^T(\mathbf{v}_{\lambda\varepsilon})\mathbf{e}_1=x_{11}
\]
\[
\lim_{\lambda\to 0}\lim_{\varepsilon\to 0^+} \pi^T(\mathbf{v}_{\lambda\varepsilon})\mathbf{e}_1=x_{j_11}
\]
and hence for every $x$ with $x_{j_11}< x< x_{11}$  there exists some $\varepsilon_0,\lambda_0\in (0,1)$ such that
\[
\pi^T(\mathbf{ v}_{\lambda_0\varepsilon_0})\mathbf{e}_1=x.
\]
\end{proof}

\begin{remark}
Note that if $\mathbf{v}>0$ such that $\mathbf{v}^T\mathbf{e}=1$, then for every $i\in\mathcal{N}$
$
\pi^T(\mathbf{v})\mathbf{e}_i
$,
is the PageRank of node $i$ when using the personalization vector $\mathbf{v}$. A convenient notation for this value is $PR(i,\mathbf{v})=\pi^T(\mathbf{v})\mathbf{e}_i$. We note here that the personalization vectors considered in \cite{Pe} were of the form ${\bf v}_{j\varepsilon}$ and there the definitions  only  deal with  $PR(i,\mathbf{v}_{j\varepsilon})$. The competitivity interval in the sense of \cite{Pe} is defined as
\[
S_{C}(i,\varepsilon) = [ \min_{j \in \cal{N}} PR(i,\mathbf{v}_{j\varepsilon}), \max_{j \in \cal{N}} PR(i,\mathbf{v}_{j\varepsilon})],
\]
for every $\varepsilon\in(0,1)$ and each $i\in\mathcal{N}$. Then from Theorem \ref{mainth} it is clear that for a given $\varepsilon \in (0,1)$ and each $i \in  \cal{N}$ we have
\[
S_{C}(i,\varepsilon) \subset \mathcal{PR}(i) = \bigcup_{\varepsilon > 0} S_{C}(i,\varepsilon).
\]
\end{remark}

\section{Some applications}\label{sec:applications}

In addition to the intrinsic interest of the previous results, the techniques developed in the last section can be useful in order to analyze the competitivity of nodes in a network according to their Personalized PageRank and other problems such as the localization of leaders in a complex network. It is well known that Personalized PageRank is a very remarkable tool that helps ranking the nodes of a network according to their centrality (see, for example, \cite{PaBrMoWi, BoSaVi, Pe}). This main fact makes that in many real-life networks (such as WWW networks or social networks) it is crucial for a node $i$ to spot other nodes that can be overcome by $i$ in a ranking based on Personalized PageRank, since these nodes are the nodes that actually {\em compete} with $i$ in the ranking based on Personalized PageRank. This problem has already been considered in the literature (see, for example, \cite{Pe}). The techniques developed in the previous section can give a computationally efficient solution to the characterization of the competing nodes of a fixed vertex $i$. Let us start stating the basic definition of {\em competitivity} between two nodes in a complex network.

\begin{definition}
Given two nodes $i,j$ ($i\ne j$) of a graph ${\mathcal G}=({\mathcal N}, {\mathcal E})$, we say that $i$ and $j$ are {\it effective competitors} if there exist two personalization vectors ${\bf v},{\bf w}$ (${\bf v},{\bf w}>0$ and ${\bf v}^T{\bf e}=1={\bf w}^T{\bf e}$) such that the $i^{\rm th}$-component of the personalized PageRank with respect to ${\bf v}$ is greater than the $j^{\rm th}$-component of the personalized PageRank with respect to ${\bf v}$, but the $i^{\rm th}$-component of the personalized PageRank with respect to ${\bf w}$ is smaller than the $j^{\rm th}$-component of the personalized PageRank with respect to ${\bf w}$, i.e.,
\begin{align*}
\pi^T({\bf v}){\bf e}_i&>\pi^T({\bf v}){\bf e}_j\\ \pi^T({\bf w}){\bf e}_i&<\pi^T({\bf w}){\bf e}_j.
\end{align*}
This definition means that nodes $i$ and $j$ appear with different rank in the personalized PageRank vector if we consider some different personalization vectors ${\bf v}$ and ${\bf w}$.
\end{definition}

\begin{remark}
Note that this definition is more restrictive than the definition of {\em competitivity group} given in \cite{Pe}. Furthermore note that the fact of being in the same {\em competitivity group} is a necessary but not a sufficient condition to be effective competitors. Later we show some examples of this fact.
\end{remark}

We will see in this section that the results and techniques coming from the last section give a positive answer to the following question:

\noindent{\bf Question.} Is there an easy method of knowing whether two given nodes are effective competitors or not?

This question and this kind of problems have been posed in the literature in social networks, and actually in \cite{Pe} a necessary condition  for a couple of nodes $i,j\in \mathcal{N}$ to compete  is given in terms of the so-called {\em competitivity intervals}. We will see in example \ref{ex:competitivity} that the result used in \cite{Pe} only gives necessary conditions for competitivity between nodes, while the following result gives a complete characterization of the competitors of a given node.

\begin{theorem}\label{th:competitivity}
Given a graph ${\mathcal G}=({\mathcal N}, {\mathcal E})$  with dangling nodes indicated by some vector ${\bf d}$, a fixed damping factor $\alpha\in (0,1)$ and fixed dangling nodes distribution ${\bf u}$, two nodes $i,j\in {\mathcal N}$ are effective competitors if and only if there exist $k,\ell\in \{1,\dots, n\}$ such that
\[
x_{ki}>x_{kj}\quad\hbox { and }\quad x_{\ell i}<x_{\ell j},
\]
where $X=(x_{pq})=(1-\alpha)(I_n-\alpha(P+{\bf d}{\bf u}^T))^{-1}$ is the $n\times n$-matrix given in formula (\ref{e:Boldhi}).
\end{theorem}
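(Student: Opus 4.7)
The plan is to reduce everything to the linear identity $\pi^T(\mathbf{v})\mathbf{e}_i = \mathbf{v}^T X \mathbf{e}_i = \sum_p v_p\, x_{pi}$ that was established before Theorem~\ref{mainth}. Subtracting the corresponding identity for $j$ gives
\[
\pi^T(\mathbf{v})\mathbf{e}_i - \pi^T(\mathbf{v})\mathbf{e}_j \;=\; \sum_{p=1}^{n} v_p\,(x_{pi} - x_{pj}),
\]
which is a strict convex combination of the numbers $x_{pi}-x_{pj}$ whenever $\mathbf{v}$ is a personalization vector. The whole theorem will then fall out of the sign analysis of this convex combination.

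For the necessary direction, I would argue contrapositively: suppose that no such $k,\ell$ exist, i.e.\ the sign of $x_{pi}-x_{pj}$ does not change with $p$. Then either $x_{pi}\ge x_{pj}$ for every $p$ or $x_{pi}\le x_{pj}$ for every $p$, and so the convex combination $\sum_p v_p(x_{pi}-x_{pj})$ has a fixed sign for every admissible $\mathbf{v}$. This rules out the existence of two personalization vectors $\mathbf{v},\mathbf{w}$ giving the strict inequalities required in the definition of effective competitors. Equivalently, if $i,j$ are effective competitors and $\mathbf{v}$ witnesses $\pi^T(\mathbf{v})\mathbf{e}_i > \pi^T(\mathbf{v})\mathbf{e}_j$, then since all $v_p>0$, at least one summand must be strictly positive, yielding $k$; the witness $\mathbf{w}$ symmetrically produces $\ell$.

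For the sufficient direction, I would reuse the approximating family $\mathbf{v}_{k\varepsilon},\mathbf{v}_{\ell\varepsilon}$ constructed in Step~2 of the proof of Theorem~\ref{mainth}. By the continuity of $\mathbf{v}\mapsto \mathbf{v}^TX\mathbf{e}_i$,
\[
\lim_{\varepsilon\to 0^+} \bigl(\pi^T(\mathbf{v}_{k\varepsilon})\mathbf{e}_i - \pi^T(\mathbf{v}_{k\varepsilon})\mathbf{e}_j\bigr) = x_{ki}-x_{kj} > 0,
\]
\[
\lim_{\varepsilon\to 0^+} \bigl(\pi^T(\mathbf{v}_{\ell\varepsilon})\mathbf{e}_i - \pi^T(\mathbf{v}_{\ell\varepsilon})\mathbf{e}_j\bigr) = x_{\ell i}-x_{\ell j} < 0.
\]
Hence, for $\varepsilon>0$ sufficiently small, $\mathbf{v}:=\mathbf{v}_{k\varepsilon}$ and $\mathbf{w}:=\mathbf{v}_{\ell\varepsilon}$ are legitimate personalization vectors that witness the effective competitivity of $i$ and $j$.

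There is no substantial obstacle here; the argument is really a sign analysis of a linear functional on the simplex combined with the continuity/approximation device already used in the proof of Theorem~\ref{mainth}. The only mild subtlety is remembering that since the coefficients $v_p$ are strictly positive, a linear combination $\sum v_p(x_{pi}-x_{pj})$ with all summands of the same sign cannot change sign — this is what forces the entries of the $i^{\mathrm{th}}$ and $j^{\mathrm{th}}$ columns of $X$ to \emph{cross} in order for $i$ and $j$ to be effective competitors.
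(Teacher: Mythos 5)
Your proof is correct and follows essentially the same route as the paper's: the sufficiency direction uses the same limiting family $\mathbf{v}_{k\varepsilon},\mathbf{v}_{\ell\varepsilon}$ borrowed from the proof of Theorem~\ref{mainth}, and the necessity direction is the same contrapositive sign analysis of the strict convex combination $\sum_p v_p\,(x_{pi}-x_{pj})$.
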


\begin{proof}
If we consider ${\bf v}_{k\varepsilon}$ and ${\bf v}_{ \ell\varepsilon}$  as defined in the proof of theorem \ref{mainth},
\begin{align*}
&\lim_{\varepsilon\to 0}\pi^T({\bf v}_{k\varepsilon}){\bf e}_i=x_{ki},\\
&\lim_{\varepsilon\to 0}\pi^T({\bf v}_{k\varepsilon}){\bf e}_j=x_{kj}, \\
&\lim_{\varepsilon\to 0}\pi^T({\bf v}_{\ell\varepsilon}){\bf e}_i=x_{\ell i},\\
&\lim_{\varepsilon\to 0}\pi^T({\bf v}_{\ell\varepsilon}){\bf e}_j=x_{\ell j},
\end{align*}
so from $x_{ki}>x_{kj}$  and $x_{\ell i}<x_{\ell j}$ there exists $\varepsilon>0$ such that the choice of  ${\bf v}_{k\varepsilon}$ or ${\bf v}_{\ell\varepsilon}$ as personalization vectors exchanges the order of nodes $i$ and $j$ in the PageRank vector with  respect to such personalization vectors.

Conversely, suppose that $i$ and $j$ are effective competitors but for all $k\in\{1,\dots, n\}$ $x_{ki}\ge x_{kj}$ (similarly, $x_{ki}\le x_{kj}$). By hypothesis, there exist some  personalization vectors ${\bf v},{\bf w}$ such that $\pi^T({\bf v}){\bf e}_i>\pi^T({\bf v}){\bf e}_j$ and $\pi^T({\bf w}){\bf e}_i<\pi^T({\bf w}){\bf e}_j.
$ In particular, if ${\bf w}^T=(w_1,\dots, w_n)$,
\[
\pi^T({\bf w}){\bf e}_i<\pi^T({\bf w}){\bf e}_j=\sum_\ell w_\ell x_{\ell j}\le\sum_\ell w_\ell x_{\ell i}=\pi^T({\bf w}){\bf e}_i
\]
leading to a contradiction.
\end{proof}

\begin{remark}
This theorem gives an easy way to search for effective competitors: it is enough to compare the $i^{\rm th}$-column and the $j^{\rm th}$-column of matrix $X$; if each entry of the $i^{\rm th}$-column is always greater or equal than the corresponding entry of the $j^{\rm th}$-column (or if it is always smaller or equal), then nodes $i$ and $j$ are not effective competitors. Otherwise, some change in the sign of the difference between columns $i$ and $j$ provide the existence of effective competitors. Moreover, if the  changes of sign occur in rows $k$ and $\ell$, we can assure that there exists $\varepsilon>0$ such that ${\bf v}_{k\varepsilon}$ and ${\bf v}_{\ell\varepsilon}$ are personalization vectors that make nodes $i$ and $j$ compete.
\end{remark}

Let us present an example of the use of the previous result and how the intersection condition presented in \cite{Pe} gives less information than the corresponding one obtained from theorem \ref{th:competitivity}.

\begin{example}\label{ex:competitivity}
Let us consider the network $\mathcal{G}_1=(V_1,E_1)$ given in figure \ref{figure:01}.
\begin{figure}[h!]
\centering
\includegraphics[scale=0.3]{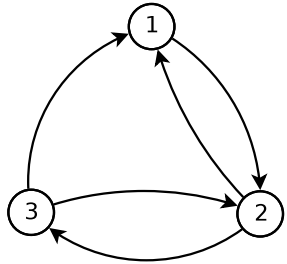}
\caption{A directed network $\mathcal{G}_1=(V_1,E_1)$ with 3 nodes}
\label{figure:01}
\end{figure}

The adjacency matrix of $\mathcal{G}_1$ is
\[
A_1=\left(
\begin{array}{ccc}
0 & 1 & 0 \\
1 & 0 & 1 \\
1 & 1 & 0
\end{array}
\right)
\]
Then, since $\mathcal{G}_1$ has no dangling nodes, if we fix $\alpha=0.85$, then we can compute the matrix $(x_{ij})=X_1=(1-\alpha)\left(I_3-\alpha P\right)^{-1}$ obtaining
\[
X_1=\left(
\begin{array}{ccc}
0.4035 & 0.4186 & 0.1779 \\
0.2982 & 0.4925 & 0.2093 \\
0.2982 & 0.3872 & 0.3146
\end{array}
\right).
\]
Hence, by using theorem \ref{mainth} we get that
\[
\begin{split}
{\mathcal {PR}}(1)&=(0.2982,0.4035),\\
{\mathcal {PR}}(2)&=(0.3872,0.4925),\\
{\mathcal {PR}}(3)&=(0.1779,0.3146).
\end{split}
\]
If we use the necessary conditions obtained in \cite{Pe}, we get that node $1$ could compete with nodes $2$ and $3$, but nodes $2$ and $3$ cannot compete between them since ${\mathcal {PR}}(1)\cap {\mathcal {PR}}(2)\ne \emptyset \ne {\mathcal {PR}}(1)\cap {\mathcal {PR}}(3)$ and ${\mathcal {PR}}(2)\cap{\mathcal {PR}}(3)=\emptyset$.

In addition to this, if we use the criterion given in theorem \ref{th:competitivity}, we note that while $1$ and $3$ are actually effective competitors, nodes $1$ and $2$ do not compete. Indeed, by comparing on the one hand the first with the third column of $X_1$  we get that $x_{11}>x_{13}$ and $x_{21}<x_{33}$ while, on the other hand, by comparing the first and the second columns of $X_1$ we get that $x_{i1}<x_{i2}$ for all $i=1,2,3$.
\end{example}

Another type of problems that can be solved by using the techniques introduced in the previous section deal with leadership of nodes. The leadership in complex networks has been studied in the Complex Networks Analysis from very different points of view, including (among others) the use of efficiency and robustness perspective in networks related with cryptography (see \cite{CrFlGVPe}) and Personalized PageRank in social networks (see \cite{Pe}). Roughly speaking a node $i$ is a leader (for the personalized PageRank-based ranking) if its personalized PageRank is maximal among all the nodes of the network for some personalization vector. This concept was studied in \cite{Pe} but only considering personalization vectors of the form  ${\bf v}_{j\varepsilon}$. As an extension of this concept we introduce the following definition.

\begin{definition}
 Given a node $i$ of a graph ${\mathcal G}=({\mathcal N}, {\mathcal E})$, we say that $i$ is a {\it leader} of ${\mathcal G}$ if there exists a personalization vector  $\mathbf{v}\in\R^n$ ($\mathbf{v}>0$ and $\mathbf{v}^T\mathbf{e}=1$) such that for every node $j\in \mathcal N$ ($j\ne i$)
 \[
 \pi^T({\bf v}){\bf e}_i>\pi^T({\bf v}){\bf e}_j.
 \]
 The set of all leader nodes of a graph $\mathcal G$ is called the {\it leadership group} of the network.
\end{definition}

Once we have considered the definition of the leadership group of a graph $\mathcal G$ it is natural to ask the following question:

\noindent{\bf Question.} Is there an easily-computable way to determine the leadership group of a graph $\mathcal G$?

Once more this question was considered in \cite{Pe} and some results in terms of competitivity intervals were presented, but they only gave sufficient conditions for a node $i$ to be a leader of the network. By using our methods we can go further and prove the following result:

\begin{theorem}\label{th:leadership}
Given a graph ${\mathcal G}=({\mathcal N}, {\mathcal E})$  with dangling nodes indicated by some vector ${\bf d}$, a fixed damping factor $\alpha\in (0,1)$ and fixed dangling nodes distribution ${\bf u}$, the leadership group of $\mathcal G$ is the set of nodes $i\in \mathcal N$ verifying that  there is a value $ j\in \mathcal N$ such that for every $k\in \mathcal N$ ($k\ne i$)
\[
x_{ji}>x_{jk},
\]
where $X=(x_{pq})=(1-\alpha)(I_n-\alpha(P+{\bf d}{\bf u}^T))^{-1}$ is the $n\times n$-matrix given in formula (\ref{e:Boldhi}).
\end{theorem}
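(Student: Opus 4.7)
The plan is to prove the equivalence by separating the two implications, using the perturbation vectors $\mathbf{v}_{j\varepsilon}$ introduced in the proof of Theorem~\ref{mainth} and a contrapositive in the style of the proof of Theorem~\ref{th:competitivity}. The key formula that I would use throughout is the identity $\pi^T(\mathbf{v})\mathbf{e}_\ell=\sum_j v_j x_{j\ell}$, inherited from $\pi^T(\mathbf{v})=\mathbf{v}^T X$.

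For the implication asserting that a strict row-maximum at $i$ forces $i$ into the leadership group, I would fix a $j\in\mathcal{N}$ with $x_{ji}>x_{jk}$ for every $k\ne i$ and take the personalization vector $\mathbf{v}_{j\varepsilon}$ defined in the proof of Theorem~\ref{mainth}. By the linearity of $\mathbf{v}\mapsto \pi^T(\mathbf{v})$, each coordinate $\pi^T(\mathbf{v}_{j\varepsilon})\mathbf{e}_\ell$ tends to $x_{j\ell}$ as $\varepsilon\to 0^+$, so the finitely many strict inequalities $x_{ji}>x_{jk}$ persist under a sufficiently small perturbation, and any such $\mathbf{v}_{j\varepsilon}$ witnesses that $i$ belongs to the leadership group.

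For the converse I would argue by contrapositive. Assuming that for every $j\in\mathcal{N}$ there is some $k_j\ne i$ with $x_{ji}\le x_{j\,k_j}$, the goal is to show that $i$ is not a leader: given an arbitrary personalization $\mathbf{v}$, I would exhibit some $k\ne i$ with $\pi^T(\mathbf{v})\mathbf{e}_i\le \pi^T(\mathbf{v})\mathbf{e}_k$. Combining the row-wise bounds gives $\sum_j v_j x_{ji}\le \sum_j v_j x_{j\,k_j}$, and the natural candidate is to take $k$ among the indices attaining $\max_{\ell\ne i}\pi^T(\mathbf{v})\mathbf{e}_\ell$. The main obstacle of the proof is precisely this ``gluing'' step: the bad index $k_j$ produced by the hypothesis depends on $j$, so the pointwise comparisons must be aggregated into a single uniform inequality in $\ell$. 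Unlike in Theorem~\ref{th:competitivity}, where the hypothesis was a clean column dominance and the aggregation was immediate, here the argument will need to use the PageRank identity $X(I_n-\alpha(P+\mathbf{d}\mathbf{u}^T))=(1-\alpha)I_n$ to force the scattered $k_j$ into a single competitor $k$ that beats $i$ under the convex combination defined by $\mathbf{v}$.
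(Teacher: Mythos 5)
Your first implication (a strict row-maximum of $X$ in column $i$ forces $i$ into the leadership group) is correct and is exactly the paper's argument: take $\mathbf{v}_{j\varepsilon}$, use $\pi^T(\mathbf{v}_{j\varepsilon})\mathbf{e}_\ell\to x_{j\ell}$ as $\varepsilon\to 0^+$ for each of the finitely many $\ell$, and let the strict inequalities $x_{ji}>x_{jk}$ persist for small enough $\varepsilon$.

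The converse, however, is not proved in your proposal, and the obstacle you flag is a genuine one, not a technicality. The negation of ``$\exists j\ \forall k\ne i:\ x_{ji}>x_{jk}$'' is ``$\forall j\ \exists k_j\ne i:\ x_{ji}\le x_{jk_j}$'' with $k_j$ depending on $j$; aggregating gives only $\sum_j v_jx_{ji}\le\sum_j v_jx_{jk_j}\le\sum_j v_j\max_{k\ne i}x_{jk}$, and the right-hand side is an \emph{upper} bound for $\max_{k\ne i}\sum_j v_jx_{jk}$, not a lower bound, so no single competitor $k$ is produced. Equivalently, $\mathbf{p}\mapsto p_i-\max_{k\ne i}p_k$ is concave, so it can be strictly positive at a convex combination of the rows of $X$ while being nonpositive at every individual row; for a general nonnegative row-stochastic matrix the implication ``leader $\Rightarrow$ strict row-maximum'' is simply false. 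Your proposed remedy --- invoking $X\bigl(I_n-\alpha(P+\mathbf{d}\mathbf{u}^T)\bigr)=(1-\alpha)I_n$ to ``force the scattered $k_j$ into a single competitor'' --- is left entirely unexecuted, and it is not clear how that identity would do so. You should be aware that the paper's own proof of this direction has the very same defect: it writes the negation with a $k$ that is implicitly uniform in $j$, which only establishes the weaker fact that $i$ is not a leader when some single column $k$ dominates column $i$ entrywise (the clean column-dominance situation of Theorem~\ref{th:competitivity}). So your diagnosis of where the difficulty lies is exactly right, but as written neither your argument nor the one it would replace closes this implication; it requires either a new argument exploiting the special structure of $X$ beyond row-stochasticity and column diagonal dominance, or a counterexample.
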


\begin{proof}
Let us denote
\[
\begin{split}
A&=\left\{i\in \mathcal{N};\enspace i\text{ is a leader of }\mathcal{G}\right\},\\
B&=\left\{i\in \mathcal{N};\enspace \exists j \in {\cal N}: x_{ji}>x_{jk} \text{ for all }k\ne i  \right\}.
\end{split}
\]
On the one hand, if we take $i\in B$, since there is a value $j \in {\cal N}$ such that for every $k\ne i$ we get that $x_{ji}>x_{jk}$, by using the same techniques (and notation) as in the proof of theorem \ref{mainth}, a simple  continuity argument makes that there is an $\varepsilon\in (0,1)$ such that for every $k\ne i$
\[
\pi^T(\textbf{v}_{j\varepsilon})\textbf{e}_i>\pi^T(\textbf{v}_{j\varepsilon})\textbf{e}_k,
\]
which makes that $i\in A$ and therefore $A\subseteq B$.

On the other hand, if $i\in A$, there is a personalization vector $v\in\R$ such that for every $k\ne i$ we know that $\pi^T({\bf v}){\bf e}_i>\pi^T({\bf v}){\bf e}_k$. As it was proved in step 1 of the proof of theorem \ref{mainth}, if we denote $v=(v_1,\cdots,v_n)$ we get that  for every $k\ne i$
\begin{equation}\label{e:contra}
\sum_{j}v_jx_{ji}=\pi^T({\bf v}){\bf e}_i> \pi^T({\bf v}){\bf e}_k=\sum_{j}v_jx_{jk}.
\end{equation}
We are going to show that if the last expression holds, then  there is a value $j \in {\cal N}$ such that $x_{ji}>x_{jk}$ for all $k\ne i$, otherwise for every  $j \in {\cal N}$ it should be that $x_{ji}\le x_{jk}$ and hence for every  $k\ne i$
\[
\pi^T({\bf v}){\bf e}_i= \sum_{j}v_jx_{ji}\le \sum_{j}v_jx_{jk}= \pi^T({\bf v}){\bf e}_k,
\]
which contradicts equation (\ref{e:contra}). Therefore there is a value $1\le j\le |\mathcal{N}|$ such that $x_{ji}>x_{jk}$ for all $k\ne i$, which makes that $i\in B$ and hence $B\subseteq A$ and we conclude the proof.
\end{proof}

\begin{remark}
Note that the last result gives an effective algorithm to locate the leadership group of a network ${\mathcal G}=({\mathcal N}, {\mathcal E})$. It is enough to spot the maximum of each row of the $n\times n$-matrix $X$  given in formula (\ref{e:Boldhi}) and if it takes place at the element $x_{ij}$, then the node $j$ is a leader of the network. Actually, the last result ensures that the only possible leader nodes of the networks are those who fulfil this property.

Therefore the matrix $X=(x_{pq})=(1-\alpha)(I_n-\alpha(P+{\bf d}{\bf u}^T))^{-1}$ encapsulates a lot of useful information about the personalized PageRank of the network: The extremal values on each column $i$ correspond to the extremal values of the set ${\mathcal {PR}}(i)$, the comparison between two columns $j\ne k$ gives the information about the effective competitivity between $j$ and $k$ and finally, the maximum on each row gives a leader node of the graph.

\end{remark}

Let us finish this section with a couple of examples that illustrate the use of the last result.

\begin{example}
Let us take the network $\mathcal{G}_2=(V_2,E_2)$ given in figure \ref{figure:02}.
\begin{figure}[h!]
\centering
\includegraphics[scale=0.3]{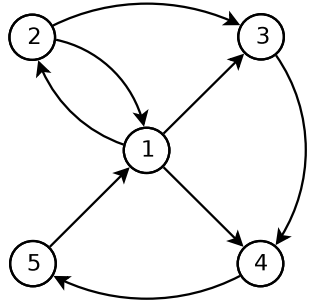}
\caption{A directed network $\mathcal{G}_2=(V_2,E_2)$ with 5 nodes}
\label{figure:02}
\end{figure}

The adjacency matrix of $\mathcal{G}_2$ is
\[
A_2=\left(
\begin{array}{ccccc}
     0  &   1  &   1  &   1  &   0 \\
     1  &   0  &   1  &   0  &   0 \\
     0  &   0  &   0  &   1  &   0 \\
     0  &   0  &   0  &   0  &   1 \\
     1  &   0  &   0  &   0  &   0
\end{array}
\right)
\]
Then, since $\mathcal{G}_2$ has no dangling nodes, if we fix $\alpha=0.85$, then we can compute the matrix $X_2=(1-\alpha)\left(I_5-\alpha P\right)^{-1}$ obtaining
\[
X_2=\left(
\begin{array}{ccccc}
    0.3514 &   0.0995 &   0.1419 &   0.2201 &   0.1871 \\
    0.2410 &   0.2183 &   0.1611 &   0.2052 &   0.1744 \\
    0.2158 &   0.0611 &   0.2371 &   0.2627 &   0.2233 \\
    0.2539 &   0.0719 &   0.1025 &   0.3090 &   0.2627 \\
    0.2986 &   0.0846 &   0.1206 &   0.1871 &   0.3090
\end{array}
\right).
\]
Now, theorem \ref{mainth} determines the set of all possible personalized PageRank values of all the nodes and we get that
\[
\begin{split}
{\mathcal {PR}}(1)&=(0.2158,0.3514),\\
{\mathcal {PR}}(2)&=(0.0611,0.2183),\\
{\mathcal {PR}}(3)&=(0.1025,0.2371),\\
{\mathcal {PR}}(4)&=(0.1871,0.3090),\\
{\mathcal {PR}}(5)&=(0.1744,0.3090).
\end{split}
\]

Note that the maximum of the first and second column is reached at $x_{11}$ and $x_{21}$ respectively. The maximum of the third and forth column is reached at $x_{34}$ and $x_{44}$ respectively, and the maximum of the last row is reached at $x_{55}$. Therefore, by using the method given by theorem \ref{th:leadership} we get that the leadership group is $\{1,4,5\}$.
\end{example}

\begin{example}
Finally, let us now take the network $\mathcal{G}_3=(V_3,E_3)$ that was introduced in \cite{Yu} (see figure~\ref{figure:03}).
\begin{figure}[h!]
\centering
\includegraphics[scale=0.4]{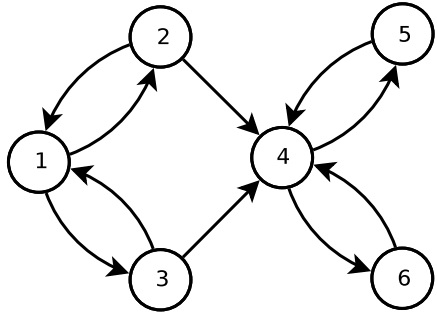}
\caption{A directed network $\mathcal{G}_3=(V_3,E_3)$ with 6 nodes introduced in \cite{Yu}}
\label{figure:03}
\end{figure}

The adjacency matrix of $\mathcal{G}_3$ is
\[
A_3=\left(
\begin{array}{cccccc}
     0  &   1  &   1  &   0  &   0  &   0 \\
     1  &   0  &   0  &   1  &   0  &   0 \\
     1  &   0  &   0  &   1  &   0  &   0 \\
     0  &   0  &   0  &   0  &   1  &   1 \\
     0  &   0  &   0  &   1  &   0  &   0 \\
     0  &   0  &   0  &   1  &   0  &   0
\end{array}
\right)
\]
Then, since $\mathcal{G}_3$ has no dangling nodes, if we fix $\alpha=0.85$, then we can compute the matrix $X_3=(1-\alpha)\left(I_6-\alpha P\right)^{-1}$ obtaining
\[
X_3=\left(
\begin{array}{cccccc}
    0.2348 &   0.0998  &  0.0998 &   0.3057 &   0.1299  &  0.1299 \\
    0.0998 &   0.1924  &  0.0424 &   0.3597 &   0.1529  &  0.1529 \\
    0.0998 &   0.0424  &  0.1924 &   0.3597 &   0.1529  &  0.1529 \\
    0      &   0       &  0      &   0.5405 &   0.2297  &  0.2297 \\
    0      &   0       &  0      &   0.4595 &   0.3453  &  0.1953 \\
    0      &   0       &  0      &   0.4595 &   0.1953  &  0.3453 \\
\end{array}
\right).
\]
In order to determine the set of all possible personalized PageRank values for all nodes, we  use once more  theorem \ref{mainth} and we obtain that
\[
\begin{split}
{\mathcal {PR}}(1)&=(0,0.2348),\qquad{\mathcal {PR}}(4)=(0.3057,0.5405),\\
{\mathcal {PR}}(2)&=(0,0.1924),\qquad{\mathcal {PR}}(5)=(0.1299,0.3453),\\
{\mathcal {PR}}(3)&=(0,0.1924),\qquad{\mathcal {PR}}(6)=(0.1299,0.3453).\\
\end{split}
\]
{}From theorem \ref{th:leadership} we have that the the leadership group is $\{4\}$. We recall here that matrix $G$ given by (\ref{e:PRMatrix}) is always an irreducible matrix since ${\bf v} > 0$. Nevertheless, $X_3$ may be  a reducible matrix as it happens in this example.
\end{example}

%
%
%

\section*{Acknowledgements}

This paper was partially supported by Spanish MICINN Funds and FEDER Funds MTM2009-13848, MTM2010-16153 and MTM2010-18674, Rey Juan Carlos University Funds I3-2010/00075/001 and Junta de Andalucia Funds FQM-264.



\end{document}